\def\N{{\mathbb{N}}}
\begin{document}

\title{Balancing Communication for Multi-party Interactive Coding}
\author{Allison Lewko\inst{1} \and Ellen Vitercik\inst{2}}

\institute{Columbia University, New York NY 10027, USA,\\
\email{alewko@cs.columbia.edu}
\and
Columbia University, New York NY 10027, USA,\\
\email{emv2126@columbia.edu}}

\maketitle     

\begin{abstract}
We consider interactive coding in a setting where $n$ parties wish to compute a joint function of their inputs via an interactive protocol over imperfect channels. We assume that adversarial errors can comprise a $\mathcal{O}(\frac{1}{n})$ fraction of the total communication, occurring anywhere on the communication network. Our goal is to maintain a constant multiplicative overhead in the total communication required, as compared to the error-free setting, and also to balance the workload over the different parties. We build upon the prior protocol of Jain, Kalai, and Lewko, but while that protocol relies on a single coordinator to shoulder a heavy burden throughout the protocol, we design a mechanism to pass the coordination duties from party to party, resulting in a more even distribution of communication over the course of the computation.
\end{abstract}

\section{Introduction}

The fundamental problem of errors in communication has been studied ever since the groundbreaking work of Shannon \cite{Sha48}. Due to his results, we know how to construct error-correcting codes that achieve a constant information rate despite a constant error rate. In this paper, we study error correction in the interactive setting, an area that was introduced by Schulman \cite{Sch92}, \cite{Sch93}. In particular, we are interested in this problem in the multi-party setting.

We note that it is insufficient to simply encode each message of the original interactive protocol with a traditional, non-interactive error-correcting code. In the case of adversarial noise, this will result in a poor error rate, since  if $k$ messages are sent in the protocol, the error-rate must be less than $1/k$. After all, we can not allow even a single message to be fully corrupted without compromising the correctness of the computation.

In this work, we show how to convert any $n$-party interactive protocol into a new protocol that is resilient to $\Theta(\frac{1}{n})$-fraction of adversarial error, while incurring only a constant blow-up in the communication complexity $CC$.

This is similar to the properties achieved by the protocol in \cite{JKL}, but we achieve a new additional feature. Namely, \cite{JKL} requires a single party to act as a ``coordinator" of the entire computation, and that party must individually perform a constant fraction of the overall total communication. In a distributed setting with $n$ parties, we might naturally want the communication burden to be more equally distributed over the course of the protocol. To ease the burden of central coordination, we introduce new techniques that allow the role of the coordinator to be passed from party to party in a rotating fashion, equalizing the communication over time. Thus, we do not require that any one party $P_i$ ``lead'' the computation for the entire time, and this avoids the resulting blowup in $P_i$'s communication complexity.

\subsection{Prior Work in the Multiparty Setting}

Rajagopalan and Schulman \cite{RS94} first extended the problem of error-resilient interactive coding to the multi-party setting, and showed how to achieve error-resilience against stochastic errors. When there are $n$ parties communicating, their method has a communication complexity overhead of $\Theta(n^2 \log n)$. The original protocol is converted to one that proceeds in rounds such that every party $P_i$ sends a message to all of its neighbors during each round. Therefore, if the original protocol is synchronous, i.e. $\Omega(n^2)$ bits are transmitted per round, and there is a $\Theta(\log n)$-factor overhead in the number of rounds.

However, protocols are not always synchronous, an issue which Jain, Kalai, and Lewko addressed \cite{JKL}. They consider an arbitrary $n$-party protocol $\pi$ with static speaking order and at least one party $P$* who shares a communication link with all other parties. In this context, they present a compiler that converts $\pi$ into a new protocol $\tilde{\pi}$ that is resilient to a $\Theta(\frac{1}{n})$-fraction of adversarial errors and which incurs only a constant blow-up in the communication complexity. Jain et al. require that every party sends all of its outgoing messages in $\pi$ to $P$*, who delivers each message to its recipient. $\pi$ is thus restricted to pairwise interactions with $P$*, all of which are protected via a two-party interactive coding scheme. Whenever $P$* notices an error on a channel, he signals a rewind to every party. It is important to note, however, that if a party $P_i$ is silent for too long a period, $P$* may not realize that there was an error on his channel with $P_i$, in which case the computation may continue, incorrectly, for an extended period of time. Therefore, Jain et al. intersperse phases during which the parties exchange regular protocol messages with ``polling phases,'' when $P$* checks in with each party about its current state. Though $CC(\tilde{\pi}) = c \cdot CC(\pi)$ for some constant $c$, the increase in work required of $P$* is potentially undesireable.

\subsection{Our Results}

We present a compiler that converts any $n$-party protocol a new protocol that achieves a constant information rate and is resilient to a $\Theta(\frac{1}{n})$-fraction of adversarial errors. Moreover, our scheme does not incur a blowup in the communication complexity of any one party during the computation. We assume that the network is complete and that the speaking order in the noiseless protocol is static.

\begin{theorem}\label{main_thm} (Informal)
There exists a compiler {\normalfont\textsf{Comp}} and constants $c > 1$ and $\epsilon \in (0,1)$ such that for any $n$-party protocol $\pi$ with static speaking order and complete communication links, {\normalfont\textsf{Comp}} compiles $\pi$ into a new protocol $\tilde{\pi}$ such that:

\begin{enumerate}
\item $CC(\tilde{\pi}) = c \cdot CC(\pi)$.
\item Protocol $\tilde{\pi}$ is resilient to $\frac{\epsilon}{n}$-fraction of adversarial errors (with high probability).
\item The communication complexity of any party $P_i$ is $\Theta\left(\frac{CC(\pi)}{n}\right)$.
\end{enumerate}
\end{theorem}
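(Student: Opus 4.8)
The plan is to localize the JKL coordinator's role to short time windows and rotate it, so the relaying burden is spread across all parties. I would fix a parameter $T = \Omega(n)$ and partition the execution of $\tilde\pi$ into consecutive \emph{epochs}, where epoch $j$ simulates roughly $T$ bits of the noiseless protocol $\pi$ and has a designated coordinator $P_{\sigma(j)}$, with $\sigma$ cycling through $\{1,\dots,n\}$ so that each party coordinates a $\Theta(1/n)$ fraction of the epochs. Within an epoch I would run essentially the \cite{JKL} sub-protocol: every party routes its outgoing $\pi$-messages through the current coordinator over two-party interactive-coding-protected links, the coordinator delivers them, detects channel errors, and broadcasts rewind signals, and polling phases are interspersed so a long-silent party cannot hide an error on its channel. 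Between epochs $j$ and $j{+}1$ I would insert a \emph{handoff} sub-protocol in which $P_{\sigma(j)}$ transmits the current shared state (the agreed transcript prefix together with meeting-point bookkeeping) to $P_{\sigma(j+1)}$ over a two-party-coded link, after which the new coordinator re-synchronizes with all parties via a single polling round.

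Correctness I would establish by a potential / meeting-point argument in the style of the two-party schemes and of \cite{JKL}: define a potential that rises by a constant with each successfully exchanged block and falls by a bounded amount with each corrupted block or spurious rewind, noting that a handoff counts as only a bounded number of blocks and so contributes only a bounded potential swing. Because the adversary's budget is an $\epsilon/n$ fraction of $CC(\tilde\pi) = \Theta(CC(\pi))$, and each corrupted bit on a coordinator link can cost the potential at most $\Theta(n)$ (a rewind there affects all $n$ parties), the adversary can undo at most $\Theta(\epsilon \cdot CC(\pi))$ worth of progress; choosing $\epsilon$ small enough that this is a small constant fraction of the total work, the potential is positive at the end, so with high probability over the randomness of the interactive-coding subroutines every party holds the correct transcript of $\pi$.

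For the communication bounds, each epoch simulates $\Theta(T)$ bits of $\pi$ and, by the \cite{JKL} analysis applied inside the epoch, uses $\Theta(T)$ bits of $\tilde\pi$-communication including polling; each handoff transmits $\Theta(T)$ bits of state plus one polling round, again $\Theta(T)$; since there are $\Theta(CC(\pi)/T)$ epochs, $CC(\tilde\pi) = \Theta(CC(\pi))$, giving items~1 and~2. For item~3, party $P_i$ incurs (a) its own $\pi$-messages, (b) relaying traffic in the $\Theta(1/n)$ fraction of epochs it coordinates together with the handoffs into and out of those epochs, contributing $\Theta(CC(\pi)/n)$, and (c) $O(1)$ bits per polling round over $\Theta(CC(\pi)/T)$ rounds, which is $O(CC(\pi)/n)$ since $T = \Omega(n)$; assuming (or after a preliminary load-balancing pass ensuring) that each party's communication in $\pi$ is $\Theta(CC(\pi)/n)$, summing these yields item~3.

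The hard part will be making the handoff robust: the outgoing coordinator must convey a nontrivial, evolving state across a noisy link while the adversary may simultaneously attack that link and the incoming coordinator's links to the other parties, creating disagreement about which party is the current coordinator or about the transcript prefix. I would handle this by (i) keeping the handed-off state small — a meeting-point-style digest suffices, as in the two-party setting — (ii) treating the handoff as just another block in the potential argument, so a corrupted handoff is recovered by an ordinary rewind, and (iii) having every party track the epoch counter locally and accept only coordinator-switch messages consistent with it, so bounded corruption can delay but not permanently desynchronize the coordinator's identity. Arranging the bookkeeping so that these ingredients compose — in particular so that a rewind triggered in epoch $j{+}1$ can, when needed, roll back into epoch $j$ and reinstate $P_{\sigma(j)}$ as coordinator — is the crux of the argument.
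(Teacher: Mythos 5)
Your high-level architecture (epoch the protocol, rotate the coordinator, run a JKL-style subprotocol inside each epoch, tie it together with a potential argument) matches the paper, and you correctly flag the handoff as the crux. But the proposal leaves the handoff mechanism at the level of ``a meeting-point-style digest suffices,'' and this glosses over the specific difficulty the paper has to solve: the incoming coordinator must be able to verify consistency of the \emph{entire} past transcript against each of the $n$ parties (since an undetected error may be many epochs old), yet it was not coordinator for most of those epochs and so has nothing of its own to compare a digest against. A digest of ``the agreed transcript prefix'' is not something the new coordinator can independently compute. The paper's resolution is an iterated hash chain: each party $P_i$ keeps $\vec{\rho}_i[j] = H_{k_i}(\sigma_{i,j}\ \|\ \vec{\rho}_i[j-1])$, so the whole history is compressed into a constant-size value per chunk per party, and the coordinator maintains a parallel chain $\vec{\rho}_i^*$; what gets passed at a handoff is the set of $n$ previous-chunk hashes, which is enough to extend the chain and do the check. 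That chaining idea, not just ``hash something small,'' is the missing ingredient.

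Two further points would undercut item~3 as you have argued it. First, you propose that ``a rewind triggered in epoch $j{+}1$ can \ldots reinstate $P_{\sigma(j)}$ as coordinator.'' If the adversary keeps forcing re-simulation of the same chunk, reinstatement concentrates all of those re-simulations on one party and destroys the $\Theta(CC(\pi)/n)$ per-party bound. The paper deliberately keeps the rotation deterministic over \emph{iterations}, not over successfully-simulated chunks, so each party coordinates at most $\lceil 5L/(kn)\rceil$ iterations no matter how the adversary forces rewinds. Second, even with a deterministic rotation, the new coordinator for a re-simulated chunk must request old hashes from whoever last successfully coordinated the preceding chunk, and under repeated rewinds the same party can be tapped over and over, again breaking balance. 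The paper handles this with a refusal-plus-timeout mechanism (a tapped party sends its hash set at most once per interval it coordinated and otherwise stays silent, and the coordinator treats a timeout as a failed chunk), and then shows in the potential analysis that such refusals cost at most a bounded amount of progress. Your accounting of handoff cost as ``into and out of the epochs a party coordinates'' implicitly assumes each epoch is handed off a constant number of times, which is exactly what the refusal mechanism is needed to enforce.
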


Item 3. above only holds when the communication of the original protocol is balanced among the parties, and the base protocol of \cite{JKL} can be suitably modified to preserve this, except for the undue burden it places on the coordinator. We discuss this in more detail in Section \ref{sec:JKL}.

\paragraph{Our Techniques}
In \cite{JKL}, the special party that coordinates the computation simulates the error-free computation and maintains a global view, rewinding when necessary to correct errors and frequently speaking with all other parties to detect any errors that have not yet been revealed. A natural idea for distributing the work of this special party is to break the computation into ``chunks" (similar to \cite{BK12}) and have a different party serve as the coordinator for each individual chunk simulation. The basic difficulty in implementing this approach is that we need a new mechanism for efficiently checking for past errors. In \cite{BK12}, where chunks are also simulated individually, hashes of the entire simulated transcript so far are used to determine if the previous chunks have been properly simulated. If these hashes reveal inconsistencies in the old chunks, the simulation process rewinds and re-simulates older chunks under the discrepancies are resolved. This is crucial to success against a significant error rate, as many individual chunk simulations may be corrupted, and moving on from these and not periodically rechecking them will lead to an incorrect result.

However, we cannot simply use hashes of the entire past transcript to check consistency as in \cite{BK12}, because our chunk coordinator is now rotating, and does not have a full view of the past chunk simulations when it was not serving as the chunk coordinator. To address this issue without incurring a super-constant blowup in total communication, we apply hashing not to the naked transcripts, but to previous hashes concatenated with the most recent chunk transcript. These previous hashes can be passed from an old coordinator to a new one at each chunk without increasing the communication complexity by more than a constant factor. One subtlety in ensuring that the communication load per party remains balanced over time is that the same party may be asked for the same old hashes more than once as the protocol attempts to resimulate a particular chunk multiple times. We address this by allowing a party to refuse to communicate under certain circumstances and for the protocol to continue anyway via a timeout mechanism. Our analysis further shows that these refusals do not prevent our simulation from making progress when the error rate is suitably bounded.

Ultimately, the compression of previous transcripts into iterated hashes plus the message passing of hashes between coordinators results in a more evenly distributed protocol without sacrificing total communication complexity or error rate (up to constant factors). We view this as a necessary first step in adapting and expanding multi-party interactive coding techniques to be more appropriate for a truly distributed setting. Ultimately, we would like to see such techniques extended to achieve stronger error-resilience for a wider variety of multi-party tasks, including classical distributed computing tasks like Byzantine agreement.

\subsection{Additional Related Work}

There have been several works improving on the (1/240)-fraction of adversarial of errors allowed by Shulman's original (two party) protocol. Braverman and Rao  improved significantly on the tolerable error rate, allowing $\frac{1}{8} - \epsilon$ with a binary alphabet or $\frac{1}{4} - \epsilon$ with a constant alphabet \cite{BR11}. \cite{AGS13} and \cite{GHS14} have improved the error rates beyond $\frac{1}{4}$ by leveraging adaptivity.

Both of the compilers in \cite{Sch93} and \cite{BR11} rely on tree-codes, which we do not know how to construct or decode efficiently. Recent work has made progress toward the efficient construction of tree-codes. In the stochastic case, Gelles, Moitra, and Sahai \cite{GMS11} showed that a weaker form of tree codes is sufficient, and thus improved on the protocols of \cite{Sch92}, \cite{Sch93}, and \cite{RS94}.

Brakerski and Kalai \cite{BK12} considered the problem with adversarial error and presented an efficient version of Schulman's compiler. \cite{BN13} improved upon the computation complexity of Brakerski and Kalai's construction. More recently, \cite{GH14} provided a simple scheme for efficiently simulating any two-party protocol, achieving optimal communication and error rates. The works of \cite{GH14,BE14} also consider list-decoding for interactive communication, while the works of \cite{KR13,H14} study the channel capacity for interactive communication.

\section{Preliminaries}

In this section, we fix notations and definitions that we will rely on throughout the paper. We will begin with description of the noiseless protocol and then move on to a description of the two-party protocol from \cite{Sch93} and the multiparty protocol from \cite{JKL}, both of which we will rely on in our own construction.

\subsection{The Noiseless Protocol} Let $\pi$ be the noiseless protocol for the $n$ parties $P_1, \dots, P_n$. We consider the case where each message in $\pi$ consists of a single bit, since this is the ``hardest'' case. Moreover, we assume that messages are sent sequentially, so that only when a party $P_i$ receives a message from some $P_j$ does $P_i$ send a new message to some $P_k$, as dictated by the static speaking order of $\pi$. We assume that the first party to speak in $\pi$ does so after receiving a dummy message. At any point during the computation of $\pi$ when it is some $P_i$'s turn to speak we can describe the \emph{next-message function} \textsf{NM$_i$}. Let $X = x_1, x_2, \dots, x_n$ denote the inputs of $P_1, P_2, \dots, P_n$ and $B_{\ell-1} = b_1, \dots, b_{\ell - 1}$ denote the first $\ell - 1$ messages of $\pi$. Finally, let $\textsf{trans}_i$ be the partial protocol transcript observed by $P_i$. Then $b_\ell = $ \textsf{NM$_i$}$(x_i; \textsf{trans}_i)$.

Let $L = CC(\pi)$. The entire transcript of $\pi$ corresponds to a unique path from the root to a leaf of a binary tree of depth $L$, which we will denote by $\mathcal{T}$. Each node of $\mathcal{T}$ corresponds to a party $P_i$, and each arc corresponds to a message sent in $\pi$, either a 0 or a 1. If there is an arc from a node labeled $P_j$ of depth $\ell$ to a node labeled $P_k$ via an arc labeled $b$ on the path in $\mathcal{T}$ corresponding to the transcript of $\pi$, then the $\ell$th message of $\pi$ is the bit $b$ sent from $P_j$ to $P_k$.

\subsection{More Details of Our Model}

We consider an adversary that's computationally unbounded and can flip bits anywhere. The adversary is only constrained to having a specific error budget over the lifetime of the protocol.

In our measurement of communication complexity, we allow for a timeout mechanism, whereby a party that does not want to send requested information can signal that by not replying  in a fixed amount of time. We do not count this in the communication complexity.

Finally, William Hoza \cite{Hoza14} observed that adding links between parties can be used to tolerate an arbitrarily high error-rate. If Alice and Bob are connected by two channels and the adversary cannot insert or delete bits, but only alter their contents, then Alice can communicate `0' to Bob by only using the first channel, and `1' by only using the second. In this way, Bob can decode Alice's messages perfectly even without looking at the contents. We assume that there is only one path between parties, so we leave this as a topic for future work.

\subsection{Schulman's Two-Party Compiler} Here we give a brief overview of a slight variant on Schulman's  compiler \cite{Sch93}, as described in \cite{JKL}. Let $\pi = \langle P_1, P_2 \rangle$ be any two-party protocol. For the error-resilient protocol $\tilde{\pi}$,

Let $\pi$ be the noiseless protocol and $\mathcal{T}$ be the protocol tree for $\pi$. For the error-resilient protocol $\widetilde{\pi}$, each party is equipped with a pebble $\alpha_i$ which points to a node in $\mathcal{T}$. $\mathcal{T}$ is padded to include dummy nodes at the bottom of the tree so that the height of the tree equals the length of the simulation. At any point, the movement of $\alpha_i$ can be described by 0, meaning that it moved down to the left child, 1, meaning that it moved down to the right child, $H$, meaning that it stayed put, and $B$, meaning that it moved back to the parent node. The movement of each pebble can be described by a 4-ary tree where each arc is labeled 0, 1, $H$ or $B$. This 4-ary tree is called a \emph{history tree}, and is denoted $\mathcal{HT}$. The two parties also share a 4-ary tree code $\mathcal{TC}$ of depth $N$ over an alphabet $\Sigma$ of constant size $c$, which they use to encode and decode each others' pebble movements. We are now ready to describe the steps taken by $P_1$ upon receiving a message from $P_2$, which are symmetric to those taken by $P_2$ upon receipt of a message from $P_1$.

\begin{enumerate}
\item \textbf{Guess $P_2$'s pebble position:} Let $\vec{\nu} = \nu_1, \dots, \nu_k$ be the sequence of tree-code symbols $\nu_i$ that $P_1$ has received from $P_2$ so far. $P_1$ guesses the history $\widetilde{hist_2}$ of pebble moves made by $P_2$ such that the hamming distance $\Delta(\mathcal{TC}(\widetilde{hist_2}, \vec{\tau})$ is minimized. $P_1$ uses $\widetilde{hist_2}$ to compute $\widetilde{\alpha_2}$, its guess for $P_2$'s pebble position in $\mathcal{T}$.

\item \textbf{Compute the next pebble move:} There are several cases depending on the relative positions of $\alpha_1$ and $\widetilde{\alpha_2}$.

\begin{itemize}
\item If $\alpha_1 = \widetilde{\alpha_2}$, then in $P_1$'s view, $P_1$ and $P_2$ are in the same position in $\mathcal{T}$. If it is $P_1$'s turn to speak, then he computes $\tau = \textsf{NM}_1(x_1, \textsf{trans}),$ where $\textsf{trans}$ is the concatenation of the arcs along the path from the root of $\mathcal{T}$ to $\alpha_1$. Otherwise, it is $P_2$'s turn to speak, in which case $P_1$ sets $\tau = H$.
\item Otherwise, if $\alpha_1$ is the parent of $\widetilde{\alpha_2}$ and $P_2$ is the label of the node pointed to by $\alpha_1$, then $P_2$ has sent a message to $P_1$, and $P_1$ should set $\tau = 0$ (respectively, 1) if $\widetilde{\alpha_2}$ is the left (respectively, right) child of $\alpha_1$.
\item Otherwise, if $\alpha_1$ is an ancestor of $\widetilde{\alpha_2}$, then $P_2$ may have moved along an incorrect path in $\mathcal{T}$. Therefore, $P_1$ sets $\tau = H$ and waits for $P_2$ to move back up the tree.

\item Otherwise, if the least common ancestor of $\alpha_1$ and $\widetilde{\alpha_2}$ is a strict ancestor of $\alpha_1$, then in $P_1$'s view, $P_1$ and $P_2$ have diverged onto different paths in $\mathcal{T}$. Therefore, $P_1$ sets $\tau = B$ in order to back up to the point of consistency in $\mathcal{T}$.
\end{itemize}

Now that $P_1$ has computed its next pebble move $\tau$, it moves $\alpha_1$ accordingly.

\item \textbf{Send next symbol to $P_2$:} Let $\textsf{hist}_1$ be the history of $P_1$'s pebble moves made during the computation of $\tilde{\pi}$. $P_1$ computes $\vec{\nu} = \mathcal{TC}(\textsf{hist}_1)$ where $\vec{\nu} = \nu_1, \dots, \nu_{|\textsf{hist}_1|}$. $P_1$ sends $\nu_{|\textsf{hist}_1|}$ to $P_2$.
\end{enumerate}

\subsection{Jain et al.'s Multi-Party Compiler}\label{sec:JKL}
First, we give the main theorem from \cite{JKL} and then give a high-level overview of their compiler.

\begin{theorem}[\cite{JKL}]\label{JKL_thm}
There exists a compiler {\normalfont\textsf{Comp}} and constants $c > 1$ and $\epsilon' \in (0,1)$ such that for any $n$-party protocol $\pi' = \langle P_1, \dots, P_n \rangle,$ with static speaking order and at least one party $P_i$ that shares a communication link with all parties $\{P_j\}_{j \not= i}.$ {\normalfont\textsf{Comp}} compiles $\pi'$ into a new protocol $\tilde{\pi}'$ such that:

\begin{enumerate}
\item $CC(\tilde{\pi}') = c \cdot CC(\pi').$
\item $\tilde{\pi}'$ is resilient to $(\frac{\epsilon'}{n})-$fraction of (adversarial) errors in the total communication.
\item The runtime of each party $P_i$ is at most $2^{\mathcal{O}(n \cdot C_{max})},$ where $C_{max}$ is the maximum number of bits sent and received by any party $P_j$ in the underlying protocol $\pi'$.
\end{enumerate}
\end{theorem}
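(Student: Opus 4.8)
The plan is to build $\tilde\pi'$ around the hub party $P^*:=P_i$, who keeps a single simulated copy of the error-free execution of $\pi'$ and acts as a relay: whenever $\pi'$ has $P_j$ send a bit to $P_k$, that bit is routed $P_j\to P^*\to P_k$. Each pairwise link $(P^*,P_j)$ is run through the two-party compiler of the previous subsection, so $P^*$ and $P_j$ carry pebbles on a shared history tree and exchange tree-code symbols encoding their moves; $P^*$'s pebble walks the protocol tree $\mathcal T$ of $\pi'$, and from its decoded guesses of the other pebbles it picks the next $\pi'$-message, issuing a back-up ($B$) on a link and rewinding its own pebble whenever it decodes some party onto a path inconsistent with the prefix it has committed to. The one ingredient beyond a naive relay is a \emph{polling schedule}: the rounds of $\tilde\pi'$ are split into alternating simulation blocks of size $\Theta(n)$ and polling blocks in which $P^*$ exchanges a constant number of symbols with \emph{every} party regardless of the speaking order of $\pi'$. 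Polling costs only a constant factor overall (a block of size $\Theta(n)$ balances the $\Theta(n)$ per-poll cost), and it is what makes the error analysis go through.

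First I would fix the length $CC(\tilde\pi')=c\cdot CC(\pi')$ for a constant $c$ to be chosen, so the adversary's budget is $B=(\epsilon'/n)\,CC(\tilde\pi')$. Because of the polling schedule every link $(P^*,P_j)$ transmits at least $\gamma\,CC(\tilde\pi')/n$ symbols for a constant $\gamma>0$, so even if the adversary concentrates its whole budget on one link the corruption fraction there is at most $B/(\gamma\,CC(\tilde\pi')/n)=\epsilon'/\gamma$; choosing $\epsilon'<\gamma\delta$, with $\delta$ the distance parameter of the tree code, keeps every link strictly below the tree code's error tolerance. Schulman's analysis of the two-party scheme then gives the per-link fact I need: if $E_j$ bits are flipped on link $(P^*,P_j)$, the number of rounds during which $P^*$'s decoded guess of $P_j$'s pebble (or $P_j$'s guess of $P^*$'s) is wrong is $O(E_j)$, hence the total number of ``confused'' link-rounds over the execution is $O(\sum_j E_j)=O(B)$.

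The heart of the argument is a progress bound layered on this. I would attach to $P^*$ a potential $\Phi$ equal to the depth of the deepest prefix of the true $\pi'$-transcript that $P^*$ has committed to and that is consistent with every party's pebble, minus a penalty for parties currently off that prefix. A polling block in which no link is confused either certifies that all parties agree with $P^*$'s committed prefix — whereupon the next simulation block advances the simulation and raises $\Phi$ by $\Theta(n)$ — or exposes an inconsistency, which $P^*$ resolves by rewinding to the last certified point; any such rewind is shallow because confused link-rounds are few, so the total regression of the certified point over the whole run is $O(nB)=O(\epsilon'\,CC(\pi'))$. A block containing a confused link-round costs $O(n)$, and there are $O(B)$ of them. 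Thus $O(CC(\pi')/n)+O(B)$ blocks suffice to push $\Phi$ to $L=CC(\pi')$, i.e.\ $O(CC(\pi'))$ total communication, so a large enough constant $c$ lets $\tilde\pi'$ finish; at termination $P^*$'s committed transcript agrees with every party and contains no undetected inconsistency, which gives correctness. For the runtime, $P^*$'s per-step work is evaluating \textsf{NM} (polynomial), and each party's cost is dominated by brute-force minimum-distance decoding of a tree code of depth $O(CC(\tilde\pi'))$ over a constant alphabet, i.e.\ $2^{O(CC(\tilde\pi'))}$ time; since $CC(\pi')\le\tfrac12 n\,C_{max}$ and $CC(\tilde\pi')=c\,CC(\pi')$, this is $2^{O(n\,C_{max})}$.

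The step I expect to be the main obstacle is making this progress bound robust to adversarially \emph{placed} rather than merely adversarially \emph{many} errors: a single well-timed flip on the $P^*$--$P_j$ link can make $P^*$ believe $P_j$ has diverged, triggering a cascade of $B$-moves and a rewind, and I have to show the rewind depth — equivalently, the regression of the certified prefix — is controlled so that the adversary cannot use a few errors to erase a long run of genuine progress. This is exactly what the polling blocks buy, since they periodically re-anchor the certified prefix, but tuning their frequency is delicate: poll too rarely and the certified prefix lags far behind, making rewinds deep; poll too often and the simulation blocks shrink below $\Theta(n)$ and the blow-up stops being constant. Getting that balance right, together with breaking the apparent circularity between ``$CC(\tilde\pi')=O(CC(\pi'))$'' (needed to bound per-link corruption) and ``per-link corruption is small'' (needed to bound $CC(\tilde\pi')$) — which I would resolve by fixing the length a priori and then verifying $\Phi$ reaches $L$ within it — is where the real work lies.
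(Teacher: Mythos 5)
This theorem is imported from \cite{JKL} rather than proved in the paper, which only sketches the construction (star network around $P^*$, pairwise tree-code compilers on each link, alternating $\Theta(n)$-bit protocol and polling phases, and a progress measure on the global consistency point that good windows advance and bad windows regress by an amount chargeable to a $\Omega(1/n)$-dense error interval). Your reconstruction follows exactly this architecture and analysis, so it is essentially the same approach as the one the paper relies on.
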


Jain et al. designate one party $P$* who shares a communication link with all other parties to ``lead'' the protocol. The network operates as a star network with $P$* as the hub node. $\pi'$ is thus decomposed into pairwise protocols $\pi_1, \dots, \pi_n$ where $\pi_i$ denotes the two-party protocol for communication between $P$* and $P_i$. $P$* shares the protocol tree $\mathcal{T}_i$ for $\pi_i$ with each $P_i$. $P$*'s copy of this tree is denoted $\mathcal{T}_i$*.

\cite{JKL} splits the simulation of $\pi$ into a sequence of phases. The phases alternate between \emph{protocol phases} and \emph{polling phases.} $\Theta(n)$ bits are exchanged during each phase. During the protocol phases, each $P_i$ follows the two-party protocol simulation strategy, since he is only communicating through $P$*. Meanwhile, every time $P$* receives a message, he computes the ``global consistency point'' in the protocol tree $\mathcal{T}$ for $\pi$. If this is further up the tree than the current node of the simulation, then $P$* signals a rewind, telling parties to back up until every party has returned to the section of $\mathcal{T}$ that is error free. During the polling phases, each party $P_i$ sends a tree code symbol indicating its pebble position in the two-party protocol tree for $\pi_i$. Therefore, if some party speaks much less often compared to other parties, an error in his computation history will still be detected in a timely manner.

In order to achieve an overall balance of communication among the players, we will need to use a variant of \cite{JKL}'s compiler that achieves such a balance, up to the imbalanced burden played on the special coordinator $P^*$. Naturally, one cannot hope to achieve a constant multiplicative overhead in the total communication complexity and a $\mathcal{O}(\frac{1}{n})$ relative complexity for each party if the original (noiseless) protocol does not have balanced communication. However, even when the original protocol does have relatively balanced communication complexity over the $n$ parties, the protocol of \cite{JKL} does not preserve this. The glaringly obvious violation is the disproportionate burden on $P^*$, but there is a more subtle potential violation as well. As $P^*$ attempts to simulate the error-free transcript during the non-polling phases of the protocol, it may adaptively speak with whatever party it deems relevant based on its current (possibly wrong) view, potentially causing communication to become unbalanced among the regular parties.

There are several cases in which this undesirable behavior of the \cite{JKL} protocol can be avoided. One case is if the underlying (noiseless) protocol is actually synchronous, proceeding in rounds in which each party sends the same number of bits. In such a setting, $P^*$ can simply collect the messages from all parties for each round and simulate rounds in a balanced way. Polling phases can then be eliminated and the analysis given in \cite{JKL} to ensure successful simulation with an error rate of $\Theta(\frac{1}{n})$ and a constant multiplicative overhead in total communication complexity still applies.

Another case is where the underlying protocol is asynchronous, but proceeds in balanced windows of communication where each party speaks once (and sends the same number of bits). For example, consider a protocol that consists of $P_1$ sending a bit, then $P_2$, then $P_3$, and so on, simply going through the parties one by one in a fixed order. The coordinator $P^*$ of the \cite{JKL} protocol could then proceed to speak with parties in this same ordering, and again separate polling phases could be eliminated.

More precisely, the $P^*$ in this case will still maintain a view of a ``global consistency point" in the overall protocol tree where he believes the other parties need to rewind too, but he will work in units that correspond to the communication windows and stick to the fixed speaking order while attempting to rewind parties back to this point and learn new information (so the global consistency point will always be defined as the beginning of a window). Intuitively, since the \cite{JKL} protocol can tolerate waiting for a polling phase for $P^*$ to speak to a particular party and discover a previous error, this approach can also tolerate a $\Theta(\frac{1}{n})$ error rate. 

To see this more formally, we can tweak the analysis given in \cite{JKL} by defining an adjusted measure of progress. Our measure $M$ could be defined (in window units) as follows. We let $depth(gcp)$ denote the window index of the true global consistency point in the noiseless $n$-party protocol tree. For each party $P_i$, we let $d(i,gcp)$ denote the number of number of windows away from $gcp$ that $P^*$'s and $P_i$'s pebbles are in their 2-party simulation. 
We then define:
\[M := depth(gcp) - \max_{i \in [n]} \{ d(i,gcp)\}\]

Analogously to \cite{JKL}, we can then define a good window as a window where all the symbols are received correctly by $P^*$ and each party $P_i$ and that all parties have correctly guessed the positions of all pebbles. In such a window, we claim that $M$ will strictly increase. To see this, we note that if $d(i,gcp) = 0$ for all $i$, then $P^*$ will successfully simulate a new window, and $depth(gcp)$ will increase. If $d(i,gcp) >0$ for some values of $i$, all of these parties will move their pebbles appropriately during the window in order to decrease $\max_{i \in [n]} \{ d(i,gcp)\}$. 

We similarly claim that any non-good window can only decrease $M$ by a constant amount. First, it is clear that $depth(gcp)$ can only change by a constant amount, and since $d(i,gcp)$ only changes by a constant amount for each $i$, the max can also only change by a constant. 

Now an analysis common to \cite{JKL} and \cite{Sch92} easily applies: we can take every non-good window and argue that it is contained in a \emph{bad interval} that contains a $\Omega( \frac{1}{n})$ fraction of errors. If any symbol in the window itself is corrupted, then that window serves as a suitable bad interval (since $\mathcal{O}(n)$ symbols are sent per window). Otherwise, the constant rate of the tree code and the fact that every party sends one symbol per window means that we take a backward-stretching interval including a number of windows proportional to the depth of the worst tree-decoding error as a suitable bad interval. As in \cite{JKL}, we can then see that a constant multiplicative overhead in the number of window simulations then suffices to insure a correct simulation of the underlying protocol. 

It is worth noting that our approach for passing off the duties of $P^*$ from party to party is rather modular, and does not depend upon the precise details of how $P^*$ simulates a piece of the protocol. Thus, other variants of \cite{JKL} or other base protocols that equalize communication complexity among parties \emph{except for a heavy burden on $P^*$} could also be inserted into our protocol to obtain analogous results.

\subsection{Hash Functions} We use a family of hash functions index by keys $k \in \{0,1\}^ t$. In particular, we invoke the following theorem of \cite{NN93}, \cite{AGHP92}.

\begin{theorem}\label{hash_thm}[\cite{NN93}, \cite{AGHP92}]
There exists a constant $q > 0$ and an ensemble of hash families and an ensemble of hash families $\{H_N\}_{N \in \N}$ such that for every $N \in \N$ and for every $h \in H_N,$ $h:\{0,1\}^{\leq 2^N} \to \{0,1\}^{qN}$ is poly-time computable, it is efficient to sample $h \leftarrow H_N$ using only $qN$ random bits, and for all $y \not=z \in \{0,1\}^{\leq 2^N}$ it holds that \[\underset{h \leftarrow H_N}{\textnormal{Pr}}[h(y) = h(z)] \leq 2^{-N}.\]
\end{theorem}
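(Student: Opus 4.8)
The plan is to instantiate $H_N$ by polynomial fingerprinting over a binary field of size roughly $2^{2N}$, which is essentially the polynomial-evaluation (``powering'') construction appearing in \cite{NN93} and \cite{AGHP92}. Take $q = 2$ and fix the field $\mathbb{F} = \mathbb{F}_{2^{2N}}$; realizing arithmetic in $\mathbb{F}$ requires an irreducible polynomial of degree $2N$ over $\mathbb{F}_2$, which can be produced either from a known explicit infinite family or by a deterministic $\mathrm{poly}(N)$-time search, as a one-time setup depending only on $N$. A hash function $h_\alpha \in H_N$ is indexed by a field element $\alpha \in \mathbb{F}$, so sampling $h \leftarrow H_N$ is exactly drawing $2N = qN$ uniform random bits.

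To evaluate $h_\alpha$ on $y \in \{0,1\}^{\le 2^N}$, first apply a prefix-free re-encoding of $y$ (e.g.\ prepend a single $1$ bit) and parse the result into $D \le 2^N/(2N) + 1$ consecutive blocks of $2N$ bits, zero-padding the last block; read the blocks as coefficients $y_0,\dots,y_{D-1} \in \mathbb{F}$ and set $h_\alpha(y) := p_y(\alpha)$, where $p_y(X) = \sum_{i=0}^{D-1} y_i X^i$. Horner's rule evaluates this with $O(D) = O(|y|/N)$ field operations, each costing $\mathrm{poly}(N) = \mathrm{poly}(\log|y|)$ time, so $h_\alpha$ is computable in time $\mathrm{poly}(|y|)$, and its output is a single element of $\mathbb{F}$, i.e.\ a string of $2N = qN$ bits.

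For the collision probability, the prefix-free encoding ensures that distinct $y \ne z$ --- even of different lengths --- give distinct coefficient vectors, so $r := p_y - p_z$ is a nonzero polynomial of degree at most $D - 1 \le 2^N/(2N)$ and therefore has at most $2^N/(2N)$ roots in $\mathbb{F}$. Since $\alpha$ is uniform over $\mathbb{F}$,
\[
\Pr_{h \leftarrow H_N}\bigl[h(y) = h(z)\bigr] = \Pr_{\alpha}\bigl[r(\alpha) = 0\bigr] \le \frac{2^N/(2N)}{2^{2N}} = \frac{1}{2N\cdot 2^{N}} \le 2^{-N}.
\]

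The argument is the standard Schwartz--Zippel / Reed--Solomon fingerprinting bound, so the only things that need care are the parameter balancing --- the field must be large enough that the root count is a $2^{-N}$ fraction of $|\mathbb{F}|$ (which is what pins $q$ down to $2$) yet small enough that both the index $\alpha$ and the output stay $O(N)$ bits, and the block size must keep $\deg p_y < |\mathbb{F}|$ even for length-$2^N$ inputs --- together with the explicit construction of $\mathbb{F}_{2^{2N}}$ for every $N$, which I regard as the main (and mild) obstacle and which is dispatched by a deterministic irreducible-polynomial procedure. One could alternatively follow the small-bias-space route of the cited papers to shrink the digest, but since an $O(N)$-bit output already suffices here, the direct polynomial construction is the cleanest path.
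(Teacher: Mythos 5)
The paper does not prove this statement itself; it cites it to \cite{NN93} and \cite{AGHP92} and uses it as a black box, so there is no internal proof to compare against. Your polynomial-fingerprinting construction is a legitimate and essentially standard way to realize the statement, and you are right that the small-bias machinery in the cited papers is overkill when an $O(N)$-bit digest is acceptable; indeed the AGHP ``powering'' construction is itself built from evaluating polynomials over $\mathbb{F}_{2^m}$, so the routes are closely related. The parameter accounting ($q=2$, seed and output both $2N$ bits, degree bound $D-1 \le 2^N/(2N)$, collision probability $\le 1/(2N \cdot 2^N) \le 2^{-N}$) all checks out, and your remark about deterministically constructing $\mathbb{F}_{2^{2N}}$ is the right thing to flag and is indeed dispatchable by Shoup-style irreducibility search (or by padding $q$ up by a constant and using an explicit irreducible family).

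There is, however, one genuine bug. \emph{Prepending} a $1$ bit is not an injective encoding once you zero-pad the last block: for example with $N=1$ (blocks of $2$ bits), $y = \varepsilon$ prepends to $1$ and pads to $10$, while $z = 0$ prepends to $10$; both produce the single block $10$, hence $p_y = p_z$, so distinct inputs of lengths $0$ and $1$ collide with probability $1$. Prepending also does not make the image prefix-free, contrary to the parenthetical claim. The standard fix is to \emph{append} a $1$ and then zero-pad to the block boundary: the padded string then uniquely decodes by stripping trailing zeros and the final $1$, so distinct $y \ne z$ (of any lengths $\le 2^N$) yield distinct coefficient vectors and the Schwartz--Zippel bound applies as you intended. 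With ``prepend'' replaced by ``append'' (or any other encoding injective under trailing-zero padding), the proof is correct.
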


We set $t = qN$ and write $h_k:\{0,1\}^{\leq 2^N} \to \{0,1\}^t$ to denote the element of $H_N$ sampled with the random string $k \in \{0,1\}^t$. We also let $\{Enc_1, Dec_1\}$ and $\{Enc_2, Dec_2\}$ denote two pairs of encoding and decoding algorithms of error-correcting codes with constant rates $\beta$ and a constant relative distance $\lambda$. In particular, we have \[Enc_1 : \{0,1\}^{2t} \to \{0,1\}^{2\beta t}\] and \[Enc_2 : \{0,1\}^{nt} \to \{0,1\}^{\beta nt}.\]

\section{Our Compiler}

\subsection{Overview}

Our compiler consists of two main phases: \emph{chunk simulations} and \emph{consistency checks.} These two phases make up a single ``iteration'' of $\tilde{\pi}$. A \emph{chunk} refers to a section of $\mathcal{T}$ such that the chunk indexed by $j$ includes all nodes in $\mathcal{T}$ of depth greater than or equal to $jk$ and less than $(j+1)k$ for a parameter $k$ that we will define. The parties will sequentially take turns serving as $P$* for a chunk simulation, during which time the protocol from \cite{JKL} (adapted as described in Section \ref{sec:JKL}) will be run. Each $P_i$ is equipped with variable $\gamma_i$, set to 0 at the beginning of the protocol execution, which indicates the chunk that the party is currently simulating. Also, as in \cite{JKL}, each party is equipped with a pebble $\alpha_i$ which points to the root of $\mathcal{T}_i$, as well as pebbles $\alpha_i$* with which to lead the simulation during their turn as $P$*.

During the consistency checks, the parties ensure that they are all simulating the same chunk of $\mathcal{T}$ and they rewind if they are out of sync. If there is a rewind, the new leader $P_{new}$* must be able to request information from the old leader $P_{old}$* who led the simulation during the chunk $P_{new}$* has backed up to. We later refer to this as ``tapping'' $P_{old}$*. $P_{new}$* must have a way of checking that the \emph{entire} computation of $\pi$ has been correct, even though he was not the leader for most of the chunks. Therefore, we require each party to store information about the computation of each chunk so that they can ensure the correctness of the simulation so far. This data, which we will describe in detail later in this section, is stored in a vector, which we call $\vec{\rho}_i$ for each $P_i$. $P_i$'s data regarding the $j$th chunk of $\mathcal{T}$ is stored in $\vec{\rho}_i$[$j$]. $P_i$ may write over data stored in any $\vec{\rho}_i$[$j$] if the $j$th chunk of $\mathcal{T}$ is simulated multiple times during the computation.

Let $CC(\pi) = L$. The simulation is stopped after a total of $5L$ symbols have been exchanged. We let $m$ be a constant positive integer such that it takes $mk$ exchanges to simulate a chunk using the protocol from \cite{JKL}. There are at least $4n \beta t$ bits exchanged during the remainder of an iteration of $\widetilde{\pi}$, where $\beta$ is the constant rate of the encoding algorithms and $t$ is the length of the hash function output. We set $4n\beta t = mk$, so that the number of bits exchanged during the simulation of a chunk via the protocol in \cite{JKL} is at least the number of bits exchanged during the remainder of an iteration in $\widetilde{\pi}$. Therefore, $k$, the depth of each chunk, is set to be $4n \beta t/m$.

At the beginning of a chunk, $P$* will begin communication with the $P_k$ that is the label of the first node of the chunk in $\mathcal{T}$. For the remainder of the chunk simulation, communication among parties will follow the protocol described in \cite{JKL}.

We will first describe the initial iteration of $\tilde{\pi}$ and then an arbitrary $j$th iteration of $\tilde{\pi}$ for $j > 1$.

\subsection{The first iteration} \label{firstTransfer}

$P_1$, as $P$*, will progress through the first chunk of $\mathcal{T}$, following the protocol described in \cite{JKL}. Now we give a high level overview of the first consistency check. Each $P_i$ hashes his transcript from $\mathcal{T_i}$ and stores it in $\vec{\rho}_i$[$0$]. In the next iteration, he will concatenate this hashed value with the transcript from the second chunk and store it in $\vec{\rho}_i$[$1$], and so on. During the consistency check, each party $P_i$ sends this hashed value to $P$* and $P$* computes his own version of this hash value for each $P_i$. If the hashed values all match up with $P$*'s $n$ computed values, then $P$* sends $F$ (``forward'') to each party. Otherwise, $P$* tells the parties to back up to the beginning of the chunk by sending $B1$ (``back up one chunk'') to each party. (In future iterations, $P$* may send $B2$ to a party $P_i$ in order to have the party back up to chunk number $\ell - 1$ if $\gamma_i = \ell$.) Specifically, the following steps occur during the first consistency check:

\begin{enumerate}
\item \textbf{Each $P_i$ computes its hash value and sends it and $\gamma_i$ to $P$*:} For all $i \in [2, \cdots, n]$, let $\sigma_{i,0}$ be the local transcript of the two-party protocol in $T_i$ corresponding to the first chunk of $\mathcal{T}$. Set $\psi_{i,0} \vcentcolon= \sigma_{i,0}.$ $P_i$ samples $k_i \leftarrow H$ and sends $Enc_1(H_{k_i}(\psi_{i,0}) \ || \ k_i)$ to $P$*. We denote received hash values by adding a tilde above the $H$, so we say that $P$* receives $\widetilde{H}_{k_i}(\psi_{i,0}$) from each $P_i$.

\item \textbf{$P$* computes its corresponding hash value:} For all $i \in [2, \cdots, n]$, let $\psi_{i,0}$* be $P$*'s local transcript of the two-party protocol in the first chunk of $\mathcal{T}_i$*. $P$* computes the corresponding $H_{k_i}(\psi_{i,0}$*).

\item \textbf{$P$* compares the hash values:} We say that the chunk was \emph{good} if for all $i$, $\widetilde{H}_{k_i}(\psi_{i,0}) = H_{k_i}(\psi_{i,0}$*). Otherwise, we say the chunk was \emph{bad}.

\item \textbf{The parties store their hash values:} For all $i \in [2, \dots, n]$, $P_i$ stores $H_{k_i}(\psi_{i,0})$ in $\vec{\rho}_i$[0] and $P_1$, who is acting as $P$*, stores $\{\{H_{k_i}(\psi_{i,0}$*)$\}, H_{k_1}(\psi_{1,0})\}$ in $\vec{\rho}_1$[0], where $\{H_{k_i}(\psi_{i,0}$*)\} = \{$H_{k_1}(\psi_{1,0}$*), $H_{k_2}(\psi_{2,0}$*),..., $H_{k_n}(\psi_{n,0}$*)\}.

\item \label{first_direction} \textbf{$P$* directs the parties to move forward or rewind:} If the chunk was good, $P$* sends $Enc_1(F)$ to each $P_i$, and $P_1$ increments $\gamma_1$. Otherwise, $P$* sends $Enc_1(B1)$ to each $P_i$.

\item \label{updateChunk} \textbf{Each party updates its chunk and pebble:} For all $i \in [2, \dots, n]$, let $X_i$ be the symbol $P_i$ received from $P$* in Step \ref{first_direction}, depending on whether the chunk was good or bad. Now, each $P_i$ sets $\gamma_i \leftarrow \textsf{chunkUpdate}(X_i, \gamma_i, \alpha_i)$.

\end{enumerate}

\fbox{
 \begin{minipage}[h]{\textwidth}\textsf{chunkUpdate($X_i, \gamma_i, \alpha_i$):}
\begin{itemize}
\item If $X_i = F$, then increment $\gamma_i$.
\item If $X_i = B1$, then set $\alpha_i$ to point to the first node in $\mathcal{T}_i$ that is a node in the $\gamma_i$th chunk.
\item If $X_i = B2$, then set $\alpha_i$ to point to the first node in $\mathcal{T}_i$ that is a node in the $(\gamma_i - 1)$th chunk in $\mathcal{T}_i$. Then decrement $\gamma_i$.
\end{itemize}
\end{minipage}}

\subsection{The $j$th iteration} \label{secondTransfer}

\subsubsection{Chunk Simulation}

Suppose it is $P_\ell's$ turn to act as $P$*. Let $s$ be the index of the chunk of $\mathcal{T}$ that $P_\ell$ will simulate.  First, each $P_i$ sends $Enc_1(\gamma_i)$ to $P$*, which it decodes using $Dec_1$. Either $\gamma_i = s$ for all $i$, or there is some $i$ for which the two values differ, and $P$* acts accordingly:

\begin{itemize}
\item If $\gamma_i = s$ for all $i$, then $P$* requests $\{H_{k_i}(\alpha_{i,s-1}$*)$\}$ from the party it believes acted as $P$* during the simulation of chunk number $s-1$, say $P_k$ during time interval $j$. $P$* will request this set of hashes by sending $Enc_1(j)$ to $P_k$. If $P_k$ has already sent this set of hashes during a different chunk simulation, \emph{then he will ignore this message from $P$*.} This is crucial to prove the third part of Theorem \ref{main_thm}. Otherwise, he will send $Enc_2(\{H_{k_i}(\alpha_{i,s-1}$*)$\}$) to $P$*.

Meanwhile, $P$* will wait a specified amount of time for $P_k$ to respond to his message. If $P$* does not receive the set of hashes, then the transaction has ``timed out,'' and $P$* sends garbage symbols for the entirety of the chunk simulation. Otherwise, $P$* receives the set of hashes from $P_k$, and $P$* leads the chunk simulation following the protocol described in \cite{JKL}.

\item If $\gamma_i \not = s$ for some $i$, then there is no way for the chunk simulation to be successful. Therefore, $P$* sends garbage symbols for the entirety of the chunk simulation.
\end{itemize}

\subsubsection{Consistency Check}

After the chunk simulation, the following set of computations and exchanges will occur.

\begin{enumerate}

\item \textbf{Each party $P_i$ computes its hash value:} Let $\sigma_{i, \gamma_i}$ be the local transcript of the two-party protocol in $\mathcal{T}_i$ corresponding to the $\gamma_i$th chunk of $\mathcal{T}$. Set \[\psi_{i,\gamma_i} = \sigma_{i, \gamma_i}  \ || \ \vec{\rho}_i[\gamma_i - 1].\] Then $P_i$ then samples $k_i \leftarrow H$ and sends $Enc_1(H_{k_i}(\psi_{i,\gamma_i})  \ || \ k_i)$ to $P$* and sets $\vec{\rho}_i[\gamma_i] = H_{k_i}(\psi_{i,\gamma_i}).$ \label{piHash}

\item \textbf{$P$* computes its corresponding hash value:} For each $i \in \{1, \dots, n\}$, $P$* creates its version of the concatenated transcripts it received from $P_i$. Recall that at the beginning of the iteration, $P$* may have received the set $\{H_{\widetilde{k_i}}(\psi_{i,s-1}$*)$\}$, where from the party that $P$* believes most recently led the computation of chunk number $s-1$ of $\mathcal{T}$. (Here we use $\widetilde{k_i}$ to denote the key sampled by each $P_i$ during the previous iteration in question.) If $P$* didn't receive this set of hashes, either because it did not ask or because the operation timed out, then it skips to Step \ref{send_direction} and stores garbage values in $\vec{\rho}_i[s]$. Otherwise, let $\sigma_{i,s}$* be $P$*'s local transcript from the two-party protocol in $T_i$* corresponding to chunk number $s$ of $\mathcal{T}$. Set

\[\psi_{i,s}\text{*} = \sigma_{i,s}\text{*}   \ || \ H_{\widetilde{k_i}}(\psi_{i,s-1}\text{*}).\] Then $P$* applies the hash function $H_{k_i}$ to $\psi_{i,s}$* to get $H_{k_i}(\psi_{i,s}$*). Since $P_\ell$ is currently acting as $P$*, it stores $\{\{H_{k_i}(\psi_{i,s}$*)$\}_{i = 1}^n, H_{k_\ell}(\psi_{\ell,s})\}$ in $\vec{\rho}_\ell[s].$

\item \textbf{$P$* compares the hash values:} If $\widetilde{H}_{k_i}(\psi_{i,s}) = H_{k_i}(\psi_{i,s}$*$)$ for all $i \in \{1, \dots n\}$, then this chunk was \emph{good.} Otherwise, it was \emph{bad.}

\item \label{send_direction} \textbf{$P$* directs the parties to move forward or rewind:} If the chunk was good, then $P$* sends $Enc_1(F)$ to each $P_i$, and $P_\ell$ increments $\gamma_\ell$.

Otherwise, the chunk was bad. Recall that at the beginning of the chunk simulation, each $P_i$ sent $Enc_1(\gamma_i)$ to $P$*. Let $c_\beta$ be the smallest value of $\gamma_i$ for all $i \in \{1, \dots, n\}$, as computed by $P$*. If $\gamma_i = c_\beta$ and $\widetilde{H}_{k_i}(\psi_{i,s}) = H_{k_i}(\psi_{i,s}$*$)$, then $P$* sends $Enc_1(B1)$ to $P_i$. Otherwise, $P$* sends $Enc_1(B2)$ to $P_i$.  Finally $P_\ell = P$* sets $\gamma_\ell \leftarrow \textsf{chunkUpdate}(B1, \gamma_\ell, \alpha_\ell)$ if $c_\beta = \gamma_\ell$ and $\gamma_\ell \leftarrow \textsf{chunkUpdate}(B2, \gamma_\ell, \alpha_\ell)$ otherwise.

\item \label{secondUpdate} \textbf{Each party updates its chunk and pebble:} For all $i \in \{1, \dots, n\} \setminus \{\ell\}$, let $X_i$ be the symbol $P_i$ received from $P$* in Step \ref{send_direction}. Now each $P_i$ sets $\gamma_i \leftarrow \textsf{chunkUpdate}(X_i, \gamma_i, \alpha_i)$.

\end{enumerate}

\section{Measuring Progress}

In the following section, we prove the success of our simulation conditioned on the event that there are no hash collisions over the course of the computation. Since there are $\frac{5nL}{k}$ hash values computed over the course of the simulation, by Theorem \ref{hash_thm} and a union bound, the probability that there is no hash collision at any point during the computation is at least \[1 - \frac{5nL}{k}\cdot 2^{-N}.\] Therefore, if we want to ensure the success of the simulation with probability $1 - 2^{-\gamma}$ for some $\gamma$, we can set \[N \geq \gamma + \log(5nL).\] $t = qN$ is set accordingly.

Now, we must define a measure of progress that we can compute at each iteration during the protocol. We will show that by the end of the simulation, the measure of progress is sufficiently high as to ensure success. To this end, let $\xi$ be the node in $\mathcal{T}$ where the first error occurred. Say $\xi$ is in chunk $c_\xi$. Now, let $c_\beta$ be the chunk such that:
\begin{itemize}
\item $c_\beta \leq c_\xi$,
\item the party who most recently led the correct simulation $c_\beta$ has not yet been tapped for the set of hashes from $c_\beta$, and
\item for all chunks $c_k \leq c_\beta$, every party is in agreement about which party simulated $c_k$ most recently and during which time interval the simulation occurred.
\end{itemize}
Intuitively, if there have been errors, $c_\beta$ is the chunk that all of the parties must back up to in order to continue the correct simulation of $\pi$. Recall that we equip each party $P_j$ with a variable $\gamma_j$ that allows them to keep track of which chunk of $\mathcal{T}$ he thinks the protocol execution is currently in.

We are concerned with two types of error:

\begin{enumerate}
\item \textbf{External chunk error:} For some $j \in [1, \dots, n],$ $\gamma_j \not= c_\beta$.

\item \textbf{Internal chunk error:} For all $j \in [1, \dots, n]$, $\gamma _j = c_\beta$, but errors have occurred in the execution of chunk $c_\beta$.
\end{enumerate}

Suppose that there are external chunk errors. Let $P_j$ be the party ``furthest ahead" of the other parties, so $\gamma_j \geq\gamma_k$ for all $k \not=j$. Then there must be at least $\gamma_j - c_\beta$ consistency checks before all parties have backed up to chunk $c_\beta$, and can start making progress again. Therefore, we define the measure of progress to be \[M = c_\beta - (\gamma_j - c_\beta) = 2c_\beta - \gamma_j.\]

First, we give a lower bound on the number of errors injected in an iteration that could cause $M$ to decrease. Let $\epsilon'$ be the constant from Theorem \ref{JKL_thm} such that the protocol in \cite{JKL} is resilient to an $\frac{\epsilon'}{n}$-fraction of error. Recall that the smallest codeword sent outside of the chunk simulation protocol from \cite{JKL} in any iteration of $\widetilde{\pi}$ is $2\beta t$ bits, and that our encoding and decoding algorithms have constant relative distance $\lambda$. Then it takes $2\lambda\beta t$ bit flips to corrupt one word, and since there are at most $10n \beta t$ bits exchanged outside of the \cite{JKL} protocol in each iteration, these exchanges are resilient to a $\frac{\lambda}{5n}$-fraction of errors. Therefore, if we set $\epsilon = \min \{\epsilon', \lambda/5\}$, then both the chunk simulation protocol and the other exchanges during the iteration are resilient to an $\frac{\epsilon}{n}$-fraction of errors.

To prove the correctness of the simulation, we first analyze the change in $M$ during ``good'' and ``bad'' iterations. We say that an iteration is \emph{good} if the subprotocol from \cite{JKL} is not overwhelmed by errors and if every party correctly decodes every other message sent during the iteration. Otherwise, we say the iteration is \emph{bad}.

Throughout the following analysis, let $P_j$ be the party with a maximum $\gamma_j$ value.

\begin{claim} \label{claim_good}
A good iteration increases $M$ by at least 1.
\end{claim}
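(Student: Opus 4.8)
The plan is to prove Claim~\ref{claim_good} by splitting a good iteration according to which of the two error types it exhibits; if no error has ever occurred then the iteration simply runs the \cite{JKL} subprotocol and the consistency check cleanly, moving every pebble forward one chunk, so assume $\xi$, and hence $c_\beta$ and the maximal party $P_j$ (with $\gamma_j \ge c_\beta$), are well defined. Two structural facts underlie the whole argument. \textbf{(i)} In any iteration the only party that ever increments a counter is the one currently playing $P^*$, and only when it declares its chunk \emph{good}; every other update is $B1$ (restart current chunk, counter unchanged) or $B2$ (decrement). \textbf{(ii)} A good iteration never decreases $c_\beta$: because every message decodes correctly, no spurious forward move happens, so unanimous agreement on the leaders of chunks $0,\dots,c_\beta$ is preserved, and the only ``consuming'' tap that can occur in a good iteration is of the leader of the chunk immediately below the chunk being simulated, which is paired with $c_\beta$ advancing past that chunk. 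Establishing (ii) rigorously is the part I expect to take the most care.

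\textbf{Case 1 (external chunk error), $\gamma_j > c_\beta$.} If $P^*$ does not receive $\gamma_i = s$ from every party (where $s$ is the chunk it is to simulate), or it does but the tap of chunk $s-1$'s former leader times out (which is forced whenever $s-1 > c_\beta$), then $P^*$ broadcasts garbage for the chunk simulation. Conditioning on no hash collisions, $P^*$'s resulting hashes agree with no party's reported hash, so the chunk is declared bad and in Step~\ref{send_direction} \emph{every} party, $P_j$ in particular, is sent $B2$ and decrements (a party is sent $B1$ only if it sits at the minimum counter \emph{and} its hash matched). By (i) no counter rises, so the new maximal counter is at most $\gamma_j-1$, and by (ii) $c_\beta$ does not fall; hence $M' \ge 2c_\beta-(\gamma_j-1) = M+1$. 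In the complementary sub-case all parties sit at $s = c_\beta+1$ and the tap succeeds; the iteration being good, the \cite{JKL} subprotocol correctly simulates chunk $c_\beta+1$, the consistency check passes, every party advances to $c_\beta+2$, and chunk $c_\beta+1$ --- now correctly simulated, unanimously attributed, leader not yet tapped --- lets $c_\beta$ advance to at least $c_\beta+1$; since here $M = 2c_\beta - (c_\beta+1) = c_\beta - 1$, we again obtain $M' \ge 2(c_\beta+1) - (c_\beta+2) = c_\beta = M+1$.

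\textbf{Case 2 (internal chunk error), $\gamma_i = c_\beta$ for all $i$} (so $M = c_\beta$), with chunk $c_\beta$'s execution faulty, which forces $c_\xi = c_\beta$. Because all counters agree (in particular $P^*$'s), $P^*$ taps the former leader of chunk $c_\beta-1$; by the defining properties of $c_\beta$ that leader has not yet been tapped for chunk $c_\beta-1$'s hashes, so it does not refuse, and in a good iteration it returns them correctly. By Theorem~\ref{JKL_thm} the \cite{JKL} subprotocol, not being overwhelmed, correctly simulates chunk $c_\beta$, so $P^*$'s per-party transcripts match the parties' own and, together with the correct chunk-$(c_\beta-1)$ hash, all comparisons in the consistency check succeed: the chunk is declared good and every party advances to $c_\beta+1$. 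The faulty execution at $c_\xi = c_\beta$ has been replaced by a correct one, chunk $c_\beta$ is unanimously attributed with an untapped leader, so $c_\beta$ advances to at least $c_\beta+1$, whence $M' \ge 2(c_\beta+1) - (c_\beta+1) = c_\beta+1 = M+1$.

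\textbf{Main obstacle.} The crux is fact (ii): $c_\beta$'s definition combines a combinatorial condition (unanimous agreement on all earlier chunk leaders) with a resource condition (the stored hashes of chunk $c_\beta$ not yet having been requested), and I expect to have to check, in each sub-case above, exactly which party is tapped and argue that no tap ``uses up'' a chunk $\le c_\beta$ unless $c_\beta$ simultaneously moves past it, so that the net movement of $c_\beta$ is never downward. A secondary point is the assertion, used in Case 1, that an out-of-sync iteration (or a timed-out tap) \emph{necessarily} produces a bad chunk --- here the no-collision conditioning is what rules out $P^*$'s garbage transcript accidentally hashing to a value some party genuinely computed.
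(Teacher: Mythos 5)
Your overall decomposition (internal vs.\ external chunk error) matches the paper's, and your Case~2 is essentially the paper's first case with the protocol mechanics spelled out. The divergence, and the gap, is in Case~1. The paper's own proof of the external-error case simply asserts that the maximal party $P_j$ decrements $\gamma_j$ and then argues only that $c_\beta$ cannot move; you instead try to derive the decrement from the protocol, and your derivation leans on the claim that a timeout is ``forced whenever $s-1 > c_\beta$.'' That claim is unjustified: a party refuses a tap only if it has already been asked for that particular hash set, and the believed leader of a chunk $s-1 > c_\beta$ need never have been tapped before (for instance, the parties may all have advanced past a corrupted chunk because the $F/B$ directives of an earlier, bad iteration were themselves corrupted). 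Consequently your two sub-cases do not exhaust Case~1: the situation where all parties agree at some $s > c_\beta+1$ and the tap succeeds is uncovered. There the JKL simulation of chunk $s$ may go through cleanly, the iterated hashes then expose the earlier corruption so the chunk is declared bad, and Step~\ref{send_direction} sends $B1$ (hold) to any party at the minimum counter whose hash matched; since some parties' hashes can match while others' do not, the maximum $\gamma_j$ need not decrease, and $M' \ge M+1$ does not follow from your argument. (In fairness, the paper's one-line assertion glosses over the same difficulty.)

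Two further points. First, your sub-case~1b and your Case~2 both assert that the consistency check passes outright; that needs more than you have. Passing requires that every $\vec{\rho}_i[s-1]$ agree with the hash $H_{\widetilde{k_i}}(\psi_{i,s-1}^*)$ handed over by the tapped party, i.e., that the most recent simulation of chunk $s-1$ and the stored hash chain are consistent across all parties. The definition of $c_\beta$ gives you this for chunks strictly before $c_\xi$, but in sub-case~1b with $c_\xi = c_\beta$ the parties may be sitting at $c_\beta+1$ precisely because they advanced past a corrupted simulation of chunk $c_\beta$, in which case the check fails rather than passes --- and the tap has then consumed a hash set, which is exactly what your fact~(ii) was supposed to rule out. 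Second, you read the second condition defining $c_\beta$ as saying the leader of chunk $c_\beta-1$ is untapped, whereas the paper's text refers to the leader of $c_\beta$ itself; whichever indexing is intended, your Case~2 needs the version guaranteeing that the tap performed when simulating chunk $c_\beta$ (which requests chunk $(c_\beta-1)$'s hashes) succeeds, and you should flag rather than silently substitute the reading you need.
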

\begin{proof}
If $\gamma_j = c_\beta$, then a good iteration will increment $c_\beta$, so $M$ will increase by 1. If $\gamma_j \not= c_\beta$, then $P_j$ will decrement $\gamma_j$ by 1. In this case, $c_\beta$ will remain the same. To see this, note that if there is any discrepancy among the parties regarding their $\gamma_i$ values, then $P$* will not request the set of hashes from any party, so $c_\beta$ will not shift. Moreover, if $\gamma_i = \gamma_k$ for all $i,k \in [1, \dots, n]$, then $P$* may request a hash set from a previous $P$*, but since $\gamma_i \not= c_\beta$, that exchange will not effect the value of $c_\beta$. Therefore, $M$ will increase by 1.
\end{proof}

%
%

\begin{claim} If the iteration is \emph{bad}, $M$ will decrease by at most 3. \end{claim}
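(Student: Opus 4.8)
The plan is to bound separately the two quantities that make up $M = 2c_\beta - \gamma_j$, namely $c_\beta$ and $\gamma_j = \max_i \gamma_i$, showing that in any bad iteration each of them can move by only a constant, and then combine. Since $M$ decreases when $c_\beta$ decreases or when $\gamma_j$ increases (or both), I will argue that a bad iteration can decrease $c_\beta$ by at most $1$ and can increase $\max_i \gamma_i$ by at most $1$; together with the $2c_\beta$ coefficient this gives a decrease of at most $2\cdot 1 + 1 = 3$, matching the claim.

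First I would analyze $\max_i \gamma_i$. In a single iteration, each party updates $\gamma_i$ exactly once, via \textsf{chunkUpdate}, which either increments by $1$ (on $F$), leaves it unchanged (on $B1$), or decrements by $1$ (on $B2$). Hence $\gamma_i$ changes by at most $1$ for every $i$, so $\max_i \gamma_i$ increases by at most $1$ regardless of how errors are distributed — this part is immediate and does not even use the error bound. Second, and this is where the real content lies, I would analyze $c_\beta$. By its definition, $c_\beta$ can only shift when (a) the set of hashes for some chunk $\le c_\xi$ gets tapped from the party who most recently led it, or (b) agreement among parties about who simulated some chunk $\le c_\beta$ and when is broken, or $c_\xi$ itself moves up because a new ``first error'' is introduced earlier. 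Each of these events, in one iteration, involves exactly one chunk: only one tapping request is issued per iteration (the current $P^*$ taps the purported leader of chunk $s-1$), only one chunk is (re)simulated per iteration so the earliest error node $\xi$ can move up by at most one chunk boundary, and agreement about a given chunk's leader is established or destroyed one chunk at a time. So $c_\beta$ decreases by at most $1$ in a bad iteration. I would state each of these as a short sub-argument keyed to the three bullet conditions defining $c_\beta$.

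Combining, in a bad iteration $M = 2c_\beta - \gamma_j$ decreases by at most $2\cdot 1 + 1 = 3$. I would phrase the conclusion as: $M_{\text{new}} = 2c_\beta^{\text{new}} - \gamma_j^{\text{new}} \ge 2(c_\beta - 1) - (\gamma_j + 1) = M - 3$.

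The main obstacle I anticipate is the $c_\beta$ bookkeeping: one must be careful that the three defining conditions of $c_\beta$ really can only each be violated ``one chunk at a time'' in a single iteration, and in particular that a burst of errors concentrated in one iteration cannot simultaneously (i) move the first-error node $\xi$ up past several chunk boundaries and (ii) desynchronize the parties' records about the leadership of several old chunks at once. The key structural facts that make this work are that a single iteration performs exactly one chunk simulation and exactly one tapping, and that the per-iteration changes to each party's state ($\gamma_i$, $\alpha_i$, one entry of $\vec\rho_i$) are all single-chunk in scope; I would make sure to invoke these explicitly rather than leave them implicit. A secondary subtlety is the role of the ``refusal'' (ignored tapping message) and timeout behavior — I would note that a refused or timed-out tap simply means the corresponding hash set is not transferred, which can only keep $c_\beta$ from advancing rather than push it backward, so it does not affect the bound.
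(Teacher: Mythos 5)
Your proposal is correct and takes essentially the same route as the paper: bound the per-iteration drop in $c_\beta$ by $1$ (via the definitional conditions on $c_\beta$ and the structural facts that each iteration involves one chunk simulation and one tap) and bound the per-iteration rise in $\max_i \gamma_i$ by $1$ (via the fact that \textsf{chunkUpdate} changes each $\gamma_i$ by at most one), then combine through $M = 2c_\beta - \gamma_j$ to get a decrease of at most $2\cdot 1 + 1 = 3$. The paper phrases the $c_\beta$ bound as two concrete failure scenarios rather than walking through the three defining bullets, and phrases the $\gamma_j$ bound in terms of a single party erroneously moving forward, but the decomposition and the arithmetic are identical to yours.
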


\begin{proof}

$c_\beta$ may move back by at most one during any iteration. This may happen in the case that either

\begin{itemize}

\item Some $P_i$ such that $\gamma_{i} = c_\beta$ jumps back when he should hold. In this case $c_\beta$ decreases by 1.
\item The party who had most recently led the correct simulation of $c_\beta$ was tapped for his hashes, and therefore will never send them again. If the simulation of $c_\beta$ is not successful, then $c_\beta$ will decrease by one.
\end{itemize}

Also, $\gamma_j$ can increase by at most one during any iteration, in the case where $P_j$, or some $P_k$ such that $\gamma_{k} = \gamma_{j}$, moves forward when he should jump back.

Therefore, $M$ decreases by at most 3.
\end{proof}

We are now ready to prove the resiliency of our simulation.

\begin{lemma} \label{measure_lemma}
When running $\widetilde{\pi}$ over a channel that makes at most \[E = \frac{5 \epsilon L}{8n}\] adversarial errors, $\widetilde{\pi}$ correctly simulates $\pi$.
\end{lemma}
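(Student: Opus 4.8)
The plan is a potential-function argument in the style of Schulman and of \cite{JKL}, built on the measure $M = 2c_\beta - \gamma_j$ already introduced. I would first reduce the lemma to showing that $M$ attains its maximal value $L/k$ (the number of chunks) before the budget of $5L$ symbols is exhausted. Since $\gamma_j \geq c_\beta$ at all times, $M = 2c_\beta - \gamma_j \leq c_\beta \leq L/k$, so $M = L/k$ forces $c_\beta = L/k$; and $c_\beta = L/k$ means every chunk of $\mathcal{T}$ has been correctly simulated with all parties in agreement on the chunk history. The one subtlety in this reduction is justifying ``correctly simulated'' from the iterated hashes: conditioned on the no-collision event established at the start of this section, a \emph{good} consistency check genuinely certifies that $P^*$'s and every $P_i$'s local chunk transcripts coincide, and because $\vec{\rho}_i[\gamma_i]$ hashes the current chunk transcript concatenated with $\vec{\rho}_i[\gamma_i - 1]$, this agreement propagates backward over all previously committed chunks, so $c_\beta = L/k$ yields an exact reconstruction of the transcript of $\pi$.

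Next I would track $M$ over iterations. It starts at $0$; by Claim~\ref{claim_good} a good iteration raises it by at least $1$, and a bad iteration lowers it by at most $3$. If $g$ and $b$ are the numbers of good and bad iterations and $I = g + b$ is the total, then at termination $M \geq g - 3b = I - 4b$, so it suffices to show $I - 4b \geq L/k$. For the lower bound on $I$, each iteration exchanges at most $mk$ symbols inside the \cite{JKL} sub-protocol plus at most $10n\beta t$ outside it; using the setting $mk = 4n\beta t$, and hence $k = 4n\beta t/m$, this makes $I$ at least a fixed constant times $L/(n\beta t)$, comparable to $L/k = mL/(4n\beta t)$.

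For the upper bound on $b$ I would run the standard ``bad-interval'' / error-charging argument, which is cleaner here than with tree codes because the cause of a bad iteration is confined to that iteration. Indeed, a bad iteration either has its \cite{JKL} sub-protocol overwhelmed --- which by Theorem~\ref{JKL_thm} needs strictly more than $\frac{\epsilon'}{n}\cdot mk = 4\epsilon'\beta t$ errors among that chunk's $mk$ symbols --- or has some codeword of the consistency check corrupted --- which needs at least $2\lambda\beta t$ errors among that iteration's remaining symbols; since $\epsilon = \min\{\epsilon', \lambda/5\}$, in either case the iteration carries $\Omega(\epsilon\beta t)$ errors. As distinct iterations occupy disjoint stretches of the communication, these counts do not overlap, so $b$ is at most $E$ divided by this per-iteration quantity. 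Substituting $E = 5\epsilon L/(8n)$ gives $b = O(L/(n\beta t))$ with a small constant, and plugging the two bounds into $M \geq I - 4b$ and comparing with $L/k$ closes the lemma.

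I expect the main obstacle to be precisely making these constants close simultaneously: one must check that the specific budget $E = 5\epsilon L/(8n)$ is small enough that $4b$ does not swallow the progress guaranteed by the lower bound on $I$, i.e.\ that $I - 4b \geq L/k$ holds with the particular constants $5$, $\frac{5}{8}$, $\lambda$, $\epsilon'$, $m$ and the relation $mk = 4n\beta t$ --- which also pins down how large $m$, the \cite{JKL} overhead per chunk, may be. A secondary point requiring care is the timeout/refusal mechanism: one must confirm that a refused ``tap'' or a timed-out hash request moves $c_\beta$ backward by at most the one chunk already charged in the bad-iteration claim, and that refusals, though they may waste an iteration, are themselves forced by errors and hence covered by the same charging; only then does the clean ``good $\Rightarrow +1$, bad $\Rightarrow \geq -3$'' dichotomy underlying the count remain valid.
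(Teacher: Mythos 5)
Your proposal is correct and takes essentially the same route as the paper: the paper merely packages your good/bad iteration count and error-charging into a potential function $\varphi = (2c_\beta - \gamma_j)k + \frac{4n}{\epsilon m}E$ that gains at least $k$ per iteration, which is algebraically identical to your bound $M \geq I - 4b$ with $b \leq nE/(\epsilon m k)$. The two points you flag as needing care are exactly where the paper's remaining work lies --- the constants are closed by the final computation showing $\frac{4n}{\epsilon m}E \geq 5L/2$ would force $E > \frac{5\epsilon L}{8n}$, and the refusal/timeout issue is absorbed into the definition of $c_\beta$ (which requires the relevant past leader not yet to have been tapped) together with the claim that a bad iteration decreases $M$ by at most 3.
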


\begin{proof}
The simulation is stopped after a total of $5L$ symbols have been exchanged, where $CC(\pi) = L$. Recall that $k$ is the depth of a chunk and $m$ is a constant positive integer such that it takes $mk$ exchanges to simulate a chunk via the protocol in \cite{JKL} and at least $4n \beta t = mk$ exchanges to complete the remainder of an interaction in $\widetilde{\pi}$. We will show that \[c_\beta > \frac{L}{k}\] by the end of the simulation, ensuring its correctness. To this end, we define the potential function \[\varphi = (2c_\beta - \gamma_{j})k + \frac{4n}{\epsilon m}E.\] (We abuse notation above and use $E$ to refer to an accumulating variable for the number of errors so far in the computation.)

Let $\varphi_\ell$ denote the change in $\varphi$ in iteration $\ell$, $\ell \in \{1, \dots, \frac{5L}{k}\}.$ We claim that $\varphi_\ell \geq k$ for all $\ell$. From Claim \ref{claim_good}, we know that if iteration $\ell$ is good, then $M= 2c_\beta - \gamma_j$ increases by one, so $\varphi_\ell \geq k$. Meanwhile, if iteration $\ell$ is bad, then $M$ decreases by at most three. However, this means that there were at least $\frac{\epsilon mk}{n}$ errors, enough to overwhelm the chunk simulation protocol from \cite{JKL}, the other exchanges that occurred during the $\ell$th iteration, or both. Therefore, $E$ increased by at least $\frac{\epsilon mk}{n}$, so \[\varphi_\ell \geq -3k + 4k = k.\]

Since there are $\frac{5L}{k}$ iterations during the simulation, $\varphi \geq 5L$ by the time it finishes. Now, we know that at the end of the simulation, $(2c_\beta - \gamma_j)k \geq 5L/2$ or $\frac{4n}{\epsilon m}E \geq 5L/2$. However, the latter would imply that \[E \geq \frac{5\epsilon L m}{8n} > \frac{5\epsilon L}{8n},\] which is a contradiction. Therefore, \[\frac{5L}{2} \leq (2c_\beta -
 \gamma_j)k \leq 2c_\beta k,\] so \[\frac{L}{k} < \frac{5L}{4k} \leq c_{\beta},\] as desired.

\end{proof}

We conclude with a proof of Theorem \ref{main_thm}.

\begin{proof}[Proof (Theorem \ref{main_thm})]
The first part of Theorem \ref{main_thm} is clear by construction, since we require that $CC(\widetilde{\pi}) = 5L = 5 CC(\pi)$. The second part of Theorem $\ref{main_thm}$ follows from Lemma \ref{measure_lemma}. Finally, each party will be required to serve as $P$* for no more than $\lceil \frac{5L}{kn} \rceil$ chunks and also must forward his hash sets to another $P$* no more than $\lceil \frac{5L}{kn} \rceil$ times. Therefore, the increase in communication required when simulating $\widetilde{\pi}$ is split evenly among the parties, so the communication complexity of any party $P_i$ is $\Theta\left(\frac{CC(\pi)}{n}\right)$.
\end{proof}
\bibliographystyle{alpha}
\bibliography{sources}
\end{document}